\long\def\symbolfootnote[#1]#2{\begingroup%
\def\thefootnote{\fnsymbol{footnote}}\footnote[#1]{#2}\endgroup}
\newtheorem{theorem}{Theorem}
\newtheorem{lemma}{Lemma}
\newtheorem{definition}{Definition}
\newcommand{\Prob}{\mathsf{P}}
\newcommand{\Expect}{\mathsf{E}}
\definecolor{lightblue}{rgb}{.7, .8, 1}
\definecolor{lightgreen}{rgb}{.6, 1, .6}
\definecolor{brown}{rgb}{1,0.38,0.03}
\definecolor{OliveGreen}{rgb}{.2,0.6,0.2}
\definecolor{BrickRed}{rgb}{.7,0.2,0.2}
\newcommand{\ignore}[1]{} %%% {} empty inside
\long\def\symbolfootnote[#1]#2{\begingroup%
\def\thefootnote{\fnsymbol{footnote}}\footnote[#1]{#2}\endgroup}
\DeclareMathOperator*{\esssup}{ess\,sup}
\newcommand{\bsp}{\begin{split}}
\newcommand{\esp}{\end{split}}
\begin{document}

\sloppy

\title{Minimax-Optimal Algorithms for Detecting Changes in Statistically Periodic Random Processes}

\author{Taposh Banerjee, \textit{Member, IEEE},  \vspace{-3ex} Prudhvi Gurram, and Gene Whipps
}

\maketitle

\symbolfootnote[0]{\small
Taposh Banerjee is with the Department of Electrical and Computer Engineering, University of Texas at San Antonio, San Antonio, TX. 
Prudhvi Gurram is with Army Research Laboratory, Adelphi, MD, and with Booz Allen Hamilton, McLean, VA. 
Gene Whipps is with the Army Research Laboratory, Adelphi, MD. 
(e-mail: taposh.banerjee@utsa.edu; gene.t.whipps.civ@mail.mil; gurram\_prudhvi@bah.com)

The work of Taposh Banerjee was partially supported
by a grant from the U.S. Army Research Laboratory, W911NF1820295, and by the Cloud Technology Endowed Professorship IV. 
A preliminary version of this paper has been presented at IEEE ICASSP 2019.

%Copyright (c) 2014 IEEE. Personal use of this material is permitted.  However, permission to use this material for any other purposes must be obtained from the IEEE by sending a request to pubs-permissions@ieee.org.
}

\begin{abstract}
Theory and algorithms are developed for detecting changes in the distribution of statistically periodic random processes. 
The statistical periodicity is modeled using independent and periodically identically distributed processes, a new class 
of stochastic processes proposed by us. 
An algorithm is developed that is minimax asymptotically optimal as the false alarm rate goes to zero. 
Algorithms are also developed for the cases when the post-change distribution is not known or when there are multiple streams of observations. 
The modeling is inspired by real datasets encountered in cyber-physical systems, biology, and medicine. 
The developed algorithms are applied to sequences of Instagram counts collected around a 5K run in New York City to detect the run.  
\end{abstract}

\begin{keywords}
Cyclostationary behavior, anomaly detection, multi-modal data, distributed detection, generalized likelihood ratio test statistic, minimax optimality, quickest change detection.
\end{keywords}

\section{Introduction}
The problem of detecting an abrupt change in the statistical properties of a measurement process has many applications in engineering and sciences including statistical process control \cite{shew-jamstaa-1925, shew-book-1931, taga-jqt-1998, stou-etal-jamstaa-2000}, sensor networks \cite{veer-ieeetit-2001, tart-veer-fusion-2002, tart-veer-fusion-2003, mei-ieeetit-2005, bane-etal-ieeetwc-2011, bane-tsp-2015}, computer networks \cite{tart-et-al-stamethod-2006, baras-etal-ieeenet-2009}, public health \cite{frisen-sqa-2009, fien-shmu-statmed-2005, baro-bookchapter-2002}, and telecommunication engineering 
\cite{arun-vinod-icumt-2009, arun-vinod-ukiwcws-2009}. In many of these applications, algorithms are needed to detect the changes in real time, that is the 
changes have to detected as soon as they occur. Mathematical problems for real-time detection are studied 
in an area of statistics called sequential analysis \cite{wald_book_1947, sieg-seq-anal-book-1985, wood-nonlin-ren-th-book-1982}. Specifically, in the problem of quickest 
change detection (QCD), theory and algorithms are developed to detect a change in the distribution of a sequence of random variables with the minimum possible delay subject to a contraint on the rate of false alarms \cite{page-biometrica-1954, lord-amstat-1971, poll-astat-1985, mous-astat-1986, lai-ieeetit-1998, tart-veer-siamtpa-2005, tartakovsky2017asymptotic, Pergamenchtchikov2018}. See \cite{veer-bane-elsevierbook-2013, poor-hadj-qcd-book-2009, tart-niki-bass-2014} for surveys. 

In this paper, we study the problem of change detection for applications where the observation process exhibits statistically periodic behavior. 
Below we give three such applications:
%\begin{figure}
%\centering
%\includegraphics[scale=0.35]{TSPFigs/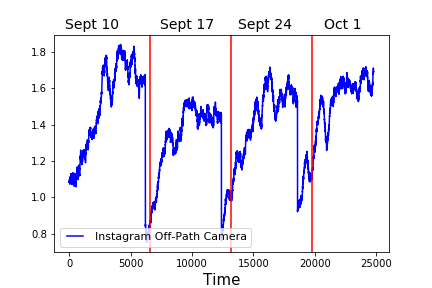}
%\hspace{-0.5cm}
%\includegraphics[scale=0.35]{TSPFigs/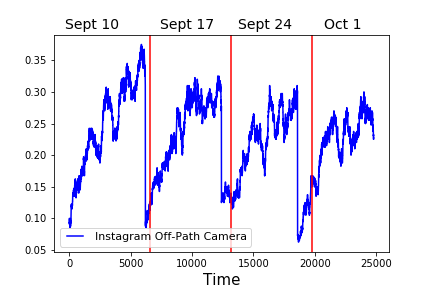}
%\includegraphics[scale=0.35]{TSPFigs/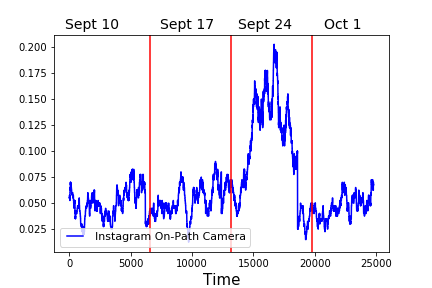}
%\caption{Cyber-physical system application: The average Instagram post counts for data collected in NYC in \cite{bane-fusion-2018}. The first two figures show that the average counts 
%have similar statistical properties across different days. The third figure shows the average Instagram count data for a CCTV camera on the path of the event and shows an increase in the average counts on the event day, Sept. 24.}
%\label{fig:personInsta}\vspace{-0.4cm}
%\end{figure}
\begin{figure}


\centering
\includegraphics[scale=0.35]{TSP_offpath4.png}
\hspace{-0.5cm}
\includegraphics[scale=0.35]{TSP_offpath7.png}
\includegraphics[scale=0.35]{TSP_onpath.png}
\caption{Cyber-physical system application: The average Instagram post counts for data collected in NYC in \cite{bane-fusion-2018}. The first two figures show that the average counts 
have similar statistical properties across different days. The third figure shows the average Instagram count data for a CCTV camera on the path of the event and shows an increase in the average counts on the event day, Sept. 24.}
\label{fig:personInsta}\vspace{-0.4cm}
\end{figure}
\begin{enumerate}[leftmargin=*]
\item Cyber-physical systems: In Fig.~\ref{fig:personInsta}, we have plotted the average number of Instagram messages posted near three CCTV cameras in New York City (NYC). These counts were extracted from multi-modal traffic data (CCTV images, Twitter and Instagram messages) that we collected in NYC \cite{bane-fusion-2018}, \cite{bane-globalsip-2018}. We collected data during a 5K run that occurred on September 24, 2017. We also collected data on two Sundays before and one Sunday after the event day. The first two plots in 
Fig.~\ref{fig:personInsta} are for two CCTV cameras that were not on the path of the 5K run while the third figure contains data corresponding to a camera on the path of the run. As can be observed from the figures, the average message counts show similar statistical behavior on the four Sundays for the off-path cameras. For the on-path camera, the average message count increases due to the event. This periodic pattern was also observed in other data that we collected in  \cite{bane-fusion-2018} and \cite{bane-globalsip-2018}. Thus, the problem of anomaly detection or change detection in traffic data can be posed as the problem of detecting deviations away from statistically periodic behavior. 
\item Neuroscience: In Fig.\ref{fig:NeuralData}, we have plotted the neural spike data (binned) collected from a brain-computer interface (BCI) study on mice \cite{zhang2003two}. In this BCI experiment, it is of interest to investigate if an observer mouse learns to associate a cue to a shock given to a target mouse. In the first $15$ trials (below dashed horizontal line), no shocks are given. Starting trial number $16$, a cue is followed by a shock. The time of the cue is indicated by the dashed vertical line in the figure. The change point problem of interest here is to detect a change in the neural firing patterns caused by the behavioral learning starting trial number $16$. If a change is detected, then further shock trials can be avoided. The neural spike data shows statistical periodicity across trials as the observer mouse has to be part of the same experiment in every trial. For further discussion, we refer the readers to \cite{bane-NER-2019}. 
%\begin{figure}
%\centering
%\includegraphics[scale=0.3]{TSPFigs/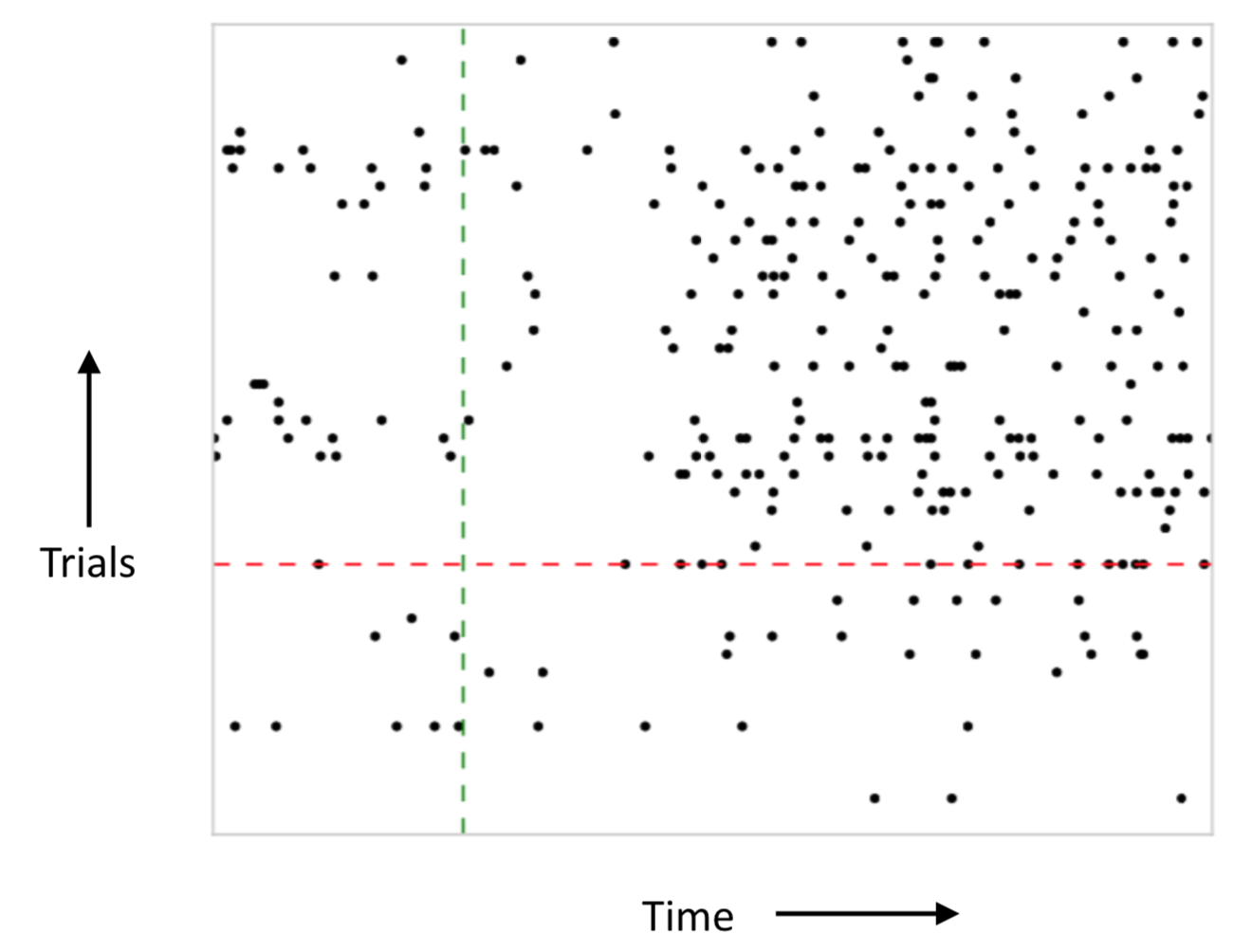}
%\caption{Neuroscience Application: Figure taken from \cite{zhang2003two} of neural firing data collected from an observer mice in a BCI experiment. }
%\label{fig:NeuralData}
%\end{figure}
\begin{figure}
\centering
\includegraphics[scale=0.3]{}
\caption{Neuroscience Application: Figure taken from \cite{zhang2003two} of neural firing data collected from an observer mice in a BCI experiment. }
\label{fig:NeuralData}
\end{figure}
\item Medicine: Signals collected in electrocardiography (ECG) show regular patterns of P waves, QRS complexes, and ST segments; see Fig.~\ref{fig:ECG}. It is well-known that  arrhythmia 
can change the shape of this pattern. Typically, a 12-lead ECG is used to record a patient's ECG data and a human expert reads a chart to detect the anomaly. This process can be automated and can be done in real time using a change-point detection algorithm. %The change point problem of interest here is the problem of detecting arrhythmias in real-time. 
%\begin{figure}
%\centering
%\includegraphics[scale=0.25]{TSPFigs/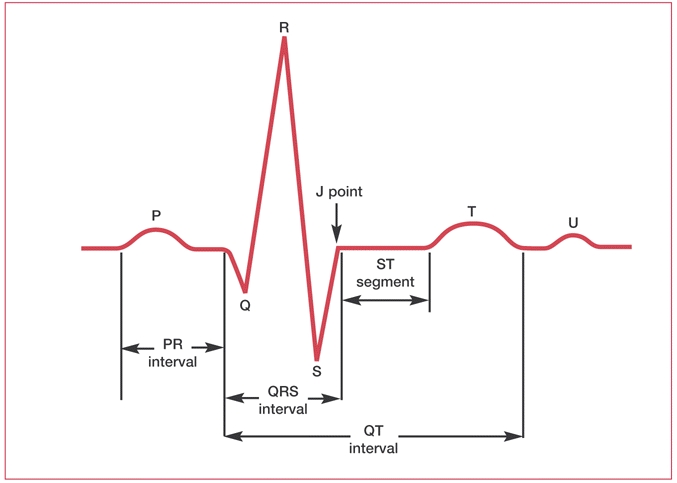}
%\includegraphics[scale=0.25]{TSPFigs/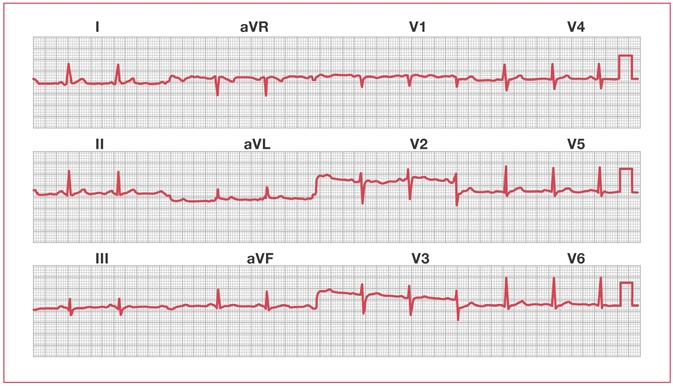}
%\caption{Biology Application: Figure taken from \cite{ashley2004cardiology} shows normal ECG signal. The ECG signal is an approximately periodic sequence of waves containing a P wave, a QRS complex and an ST segment. }
%\label{fig:ECG}
%\end{figure}
\begin{figure}


\centering
\includegraphics[scale=0.25]{ECGPQRST.jpg}
\includegraphics[scale=0.25]{ECG.jpg}
\caption{Biology Application: Figure taken from \cite{ashley2004cardiology} shows normal ECG signal. The ECG signal is an approximately periodic sequence of waves containing a P wave, a QRS complex and an ST segment. }
\label{fig:ECG}
\end{figure}
\end{enumerate}

\medskip

The classical QCD literature can be broadly classified into two categories: results for independent and identically distributed (i.i.d.) processes with algorithms that can be computed recursively and enjoy strongly optimality properties \cite{mous-astat-1986}, and results for non-i.i.d. data with algorithms that are hard to compute but are asymptotically optimal \cite{lai-ieeetit-1998, tart-veer-siamtpa-2005, tartakovsky2017asymptotic, Pergamenchtchikov2018}. As can be seen in the figures (Fig.~\ref{fig:personInsta}--Fig.~\ref{fig:ECG}), the data encountered 
in applications of interest cannot be modeled as i.i.d. processes.  
To model statistically periodic data, such as shown in these figures, we use a new class of stochastic processes called independent and periodically identically distributed processes (i.p.i.d.). This is a class of processes introduced by us in \cite{bane-ieeeit-2019}. For this class of non-i.i.d. processes, we will show in this paper that the optimal algorithms can be computed recursively and using finite memory. This makes the 
developed algorithms amenable to implementation in various applications in cyber-physical systems, medicine, and biology. 

In \cite{bane-ieeeit-2019}, we provided a Bayesian optimality theory of change detection in i.p.i.d. processes where it was assumed that a prior distribution of the change point variable is available to the decision maker. Since such an assumption is typically not satisfied in practice, in this paper we take a non-Bayesian approach. Specifically, we obtain algorithms that are asymptotically optimal for two minimax stochastic optimization formulations \cite{lord-amstat-1971}, \cite{poll-astat-1985}. We also develop algorithms when the post-change distribution is unknown or when there are multiple streams of observations. 
Finally, we apply the developed algorithms to the NYC traffic data to detect the 5K run.

\section{Model and Problem Formulation}\label{sec:modelproblem}
In this section, we discuss the concept of an i.p.i.d. process which we will use to model statistically periodic data. We also discuss a change point model and 
two minimax problem formulations. 

\subsection{Statistical Model}
An i.i.d. process is a sequence of random variables that 
are independent and have the same distribution. To model the periodic statistical behavior of data, we use a new class of stochastic processes called 
i.p.i.d. processes which are defined as follows. 

\begin{definition}
Let $\{X_n\}$ be a sequence of random variables such that the variable $X_n$ has density $f_n$. The stochastic process $\{X_n\}$
is called independent and periodically identically distributed (i.p.i.d) if $X_n$ are independent and there is a positive integer $T$ such 
that the sequence of densities $\{f_n\}$ is periodic with period $T$:
$$
f_{n+T} = f_n, \quad \forall n \geq 1. 
$$
We say that the process is i.p.i.d. with the law $(f_1, \cdots, f_T)$. 
\end{definition}
If $T=1$, then the i.p.i.d. process is an i.i.d. process. 
The law of an i.p.i.d. process is completely characterized by the finite-dimensional product distribution involving $(f_1, \cdots, f_T)$. 
We assume that in a normal regime, the data can be modeled as an i.p.i.d. process. At some point in time, due to an anomaly, 
the distribution of the i.p.i.d. process deviates from $(f_1, \cdots, f_T)$. 
Our objective in this paper is to develop algorithms
to process $\{X_n\}$ in real time and detect changes in the distribution as quickly as possible, subject to a constraint 
on the rate of false alarms. 

In practice, the following parametric form of an i.p.i.d. process can be used. Let $\{X_n\}$ is an independent sequence of random variables with distribution in a parametric family with parameters $\{\theta_n\}$: 
\begin{equation}\label{eq:paraDistmod}
\begin{split}
X_n & \stackrel{\text{ind}}{\sim} p(\cdot; \theta_n), \; n \geq 1. \\
\end{split}
\end{equation}
The process $\{X_n\}$ is an i.p.i.d. process (a parametric i.p.i.d. process) if there is an integer $T > 0$ such that the parameter sequence $\{\theta_n\}$ is periodic with period $T$:
\begin{equation}\label{eq:periodPara}
\begin{split}
\theta_n &= \theta_{n+T},\; n \geq 1.
\end{split}
\end{equation}
Note that the statistical model in \eqref{eq:periodPara} has only $T$ parameters $\theta_1, \cdots, \theta_T$. The change detection problem in this case reduces
to detecting changes in these $T$ parameters. 
Given the parameters $\theta_1, \cdots, \theta_T$, we can use an algorithm to observe the process
$\{X_n\}$ sequentially over time $n$ and detect any changes in the values of any of the parameters. The baseline parameters in the problem, the period $T$ and the parameters within a period $\theta_1, \cdots, \theta_T$, can be learned from the training data. 
A special and important case of a parametric i.p.i.d. process 
is when we have a smooth function 
$\theta(t), t \in [0,1],
$
and 
\begin{equation}\label{eq:paraipidregression}
X_n \sim p\left(\cdot,\theta\left(\frac{n\%T}{T}\right) \right) \quad \; n \geq 1.
\end{equation}
where $n\%T$ represents $n$ modulo $T$. 
An example is a regression set up:
\begin{equation}\label{eq:paraipidregression_2}
X_n = \theta\left(\frac{n\%T}{T}\right) + Z_n \quad \; n \geq 1,
\end{equation}
where $\{Z_n\}$ is a zero-mean i.i.d. sequence. The change detection problem for the regression setup is 
the problem of detecting changes in the regression function $\theta(t)$.

Note that the sequence model in \eqref{eq:periodPara} is different from the sequence model studied in \cite{Johnstone2015Book} and \cite{tsybakov2009introduction}. In the model studied in \cite{Johnstone2015Book} and \cite{tsybakov2009introduction}, the random variables $\{X_n-\theta_n\}$ are modeled as Gaussian random variables and the parameters $\{\theta_n\}$ are not periodic. Furthermore, the problem there is of simultaneous estimation of all the different parameters $\{\theta_n\}$ given all the observations $\{X_n\}$. That is, the problem is not sequential in nature. It is also not a change point problem.

An i.p.i.d. process is a cyclostationary process  \cite{gardner2006cyclostationarity}. Although a cyclostationary process can also be used to model statistically periodic behavior, the i.p.i.d. definition captures sample level behavior and the independence assumption across time allows for the development of a strong change point detection theory.

%We want to develop algorithms that can be applied to any real-valued random 
%process, not just to an integer-valued process (e.g., a sequence of counts). 

%In Section~\ref{sec:SingleSeqOpt}--\ref{sec:DistDetect}, we develop theory for change detection in i.p.i.d. processes. 
%In Section~\ref{sec:Paramipid}, we introduce a parametric class of i.p.i.d. processes and discuss modeling, learning, and inference using the parametric models.
%In Section~\ref{sec:Numerical}, we provide simulation results and also apply the algorithms to the NYC data. 

\subsection{Change Point Model}
To define a change point model, consider another periodic sequence of densities $\{g_n\}$ such that 
$$
g_{n+T} = g_{n}, \quad \forall n \geq 1. 
$$
Thus, we essentially have $T$ distinct set of densities $(g_1, \cdots, g_T)$. We assume that at some point in time $\nu$, 
called the change point in the following, the law of the i.p.i.d. process is governed not by the densities $(f_1, \cdots, f_T)$, 
but by the new set of densities $(g_1, \cdots, g_T)$. % (a precise definition is given below).  
 These densities need not be all different from the set of densities $(f_1, \cdots, f_T)$, 
but we assume that there exists at least an $i$ such that they are different:
\begin{equation}\label{eq:diffpdfassum}
g_i \neq f_i, \quad \text{for some } i = 1, 2, \cdots, T. 
\end{equation}
The change point model is as follows. At a time point $\nu$, the distribution of the random variable changes from $\{f_n\}$ to $\{g_n\}$:
\begin{equation}\label{eq:changepointmodel}
X_n \sim 
\begin{cases}
f_n, &\quad \forall n < \nu, \\
g_n &\quad \forall n \geq \nu.
\end{cases}
\end{equation}
We emphasize that the densities $\{f_n\}$ and $\{g_n\}$ are periodic. This model is equivalent to saying that we have two i.p.i.d. processes, 
one governed by the densities $(f_1, \cdots, f_T)$ and another governed by the densities $(g_1, \cdots, g_T)$, and at the change point $\nu$, 
the observation process switches from one i.p.i.d. process to another. 

%A more general change point model where the exact post-change density is unknown will be discussed in Section~\ref{sec:UnknownPost}. 

\subsection{Problem Formulation}
We want to detect the change described in \eqref{eq:changepointmodel} as quickly as possible, subject to a constraint on the rate of false alarms. 
We are looking for a stopping time $\tau$ for the process to minimize a metric on the delay $\tau - \nu$ and to avoid the event of false alarm $\{\tau < \nu\}$. Specifically, 
we are interested in the popular false alarm and delay metrics of Pollak \cite{poll-astat-1985} and Lorden~\cite{lord-amstat-1971}. Let $\Prob_\nu$ denote the
probability law of the process $\{X_n\}$ when the change occurs at time $\nu$ and let $\Expect_\nu$ denote the corresponding expectation. When there is no change, 
we use the notation $\Expect_\infty$. The quickest change detection problem formulation of Pollak \cite{poll-astat-1985} is defined as 
\begin{equation}\label{eq:Pollak}
\begin{split}
\min_\tau &\;\;\; \sup_{\nu\geq 1} \; \Expect_\nu [\tau - \nu | \tau \geq \nu] \\
\text{subj. to}& \;\;\;\; \Expect_\infty[\tau] \geq \beta,
\end{split}
\end{equation}
where $\beta$ is a given constraint on the mean time to false alarm. Thus, the objective is to find a stopping time $\tau$ 
that minimizes the worst case conditional average detection delay subject to a constraint on the mean time to false alarm. 
A popular alternative is the worst-worst case delay metric of Lorden \cite{lord-amstat-1971}: 
\begin{equation}\label{eq:Lorden}
\begin{split}
\min_\tau &\;\;\;\; \sup_{\nu\geq 1} \; \text{ess} \sup \Expect_\nu [\tau - \nu | X_1, \cdots, X_{\nu-1}]\\
\text{subj. to}& \;\;\; \;\Expect_\infty[\tau] \geq \beta,
\end{split}
\end{equation}
where $\text{ess} \sup$ is used to denote the supremum of the random variable $\Expect_\nu [\tau - \nu | X_1, \cdots, X_{\nu-1}]$ outside a set of measure zero. 
Further motivation and comparison of these and other problem formulations for change point detection can be found in the literature \cite{veer-bane-elsevierbook-2013}, \cite{tart-niki-bass-2014}, \cite{poor-hadj-qcd-book-2009}, \cite{lai-ieeetit-1998}.

In Section~\ref{sec:SingleSeqOpt}, we develop algorithms and optimality theory for detecting changes in 
 i.p.i.d. processes. In Section~\ref{sec:UnknownPost}, we extend the results to the case when the post-change i.p.i.d. law $(g_1, \cdots, g_T)$ is unknown. 
In Section~\ref{sec:DistDetect}, we study the distributed case where there are multiple parallel streams of i.p.i.d. processes and the change can occur in any one of them. 
%In Section~\ref{sec:Paramipid}, we discuss parametric i.p.i.d. models that are easier to learn as compared to learning $(f_1, \cdots, f_T)$. 

\section{Change Detection in a Single Sequence with Known Post-Change Law}\label{sec:SingleSeqOpt}
We now propose a CUSUM-type scheme to detect the above change. This algorithm belongs to the class of generalized CUSUM schemes discussed 
in the literature \cite{lai-ieeetit-1998}. We compute the sequence of statistics 
\begin{equation}\label{eq:PeriodicCUSUM}
W_{n+1} = \max_{1 \leq k \leq n+1} \sum_{i=k}^{n+1} \log \frac{g_{i}(X_{i})}{f_{i}(X_{i})}
\end{equation}
and raise an alarm as soon as the statistic is above a threshold $A$:
\begin{equation}\label{eq:PeriodicCUSUMstop}
\tau_c = \inf \{n \geq 1: W_n > A\}.
\end{equation}
We show below that this scheme is asymptotically optimal for the minimax problem formulations in \eqref{eq:Pollak} and \eqref{eq:Lorden}. We first show 
that the statistic $W_n$ can be computed recursively and using finite memory. 
%The proof of this and all the other results are provided in Section~~\ref{sec:proofs}. 
%
%
\medskip
\begin{lemma}\label{thm:recurs}
The statistic sequence $\{W_n\}$ can be recursively computed as 
\begin{equation}\label{eq:PeriodicCUSUMrecur}
W_{n+1} = W_n^{+} + \log \frac{g_{n+1}(X_{n+1})}{f_{n+1}(X_{n+1})},
\end{equation}
where $(x)^+ = \max\{x, 0\}$. 
Further, since the set of pre- and post-change densities $(f_1, \cdots, f_T)$ and $(g_1, \cdots, g_T)$ are finite, 
the recursion \eqref{eq:PeriodicCUSUMrecur} can be computed using finite memory needed to store these $2T$ densities, the current observation, 
and the past statistic. 
\end{lemma}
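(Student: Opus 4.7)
My plan is to establish the recursion by the standard CUSUM-style decomposition of the inner maximum, and then argue the memory claim as an immediate corollary of the periodicity of the pre- and post-change densities.

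First, I would split the maximum in the defining expression
\[
W_{n+1} = \max_{1 \leq k \leq n+1} \sum_{i=k}^{n+1} \log \frac{g_i(X_i)}{f_i(X_i)}
\]
according to whether $k = n+1$ or $k \leq n$. Isolating the $i = n+1$ term, the inner sum factors as $\sum_{i=k}^{n} \log(g_i(X_i)/f_i(X_i)) + \log(g_{n+1}(X_{n+1})/f_{n+1}(X_{n+1}))$, so the log-likelihood increment at time $n+1$ can be pulled outside the maximum. What remains inside is $\max\{0, \max_{1 \leq k \leq n} \sum_{i=k}^n \log(g_i(X_i)/f_i(X_i))\} = W_n^+$, where the $0$ accounts for the $k = n+1$ branch. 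Combining these pieces yields \eqref{eq:PeriodicCUSUMrecur}. A brief base-case check at $n = 0$ (with the convention $W_0 = 0$) confirms $W_1 = \log(g_1(X_1)/f_1(X_1))$, matching the direct evaluation of the definition.

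For the finite-memory claim, I would point out that by the periodicity $f_{n+T} = f_n$ and $g_{n+T} = g_n$, the density pair needed to compute the log-likelihood increment at time $n+1$ is $(f_{(n+1)\%T},\, g_{(n+1)\%T})$, which can be looked up from the fixed table of $2T$ densities indexed by $n+1$ modulo $T$. Hence at each step the algorithm needs only: the $2T$ stored densities, a modulo-$T$ counter, the current observation $X_{n+1}$, and the scalar $W_n$. No growing history is required, so the recursion is finite-memory.

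The argument is essentially bookkeeping, so I do not anticipate a serious obstacle; the only subtle point is handling the boundary term $k = n+1$ correctly so that the $(\cdot)^+$ operation appears naturally (rather than, say, as $\max$ against the current increment), and making the base case and indexing consistent with the definition starting at $k = 1$.
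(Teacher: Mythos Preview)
Your proposal is correct and follows essentially the same approach as the paper: split the outer maximum at $k=n+1$ versus $k\le n$, pull the time-$(n+1)$ log-likelihood increment outside, recognize the remaining term as $W_n^{+}$, and then invoke periodicity of $(f_1,\dots,f_T)$ and $(g_1,\dots,g_T)$ for the finite-memory claim. The only differences are cosmetic (you add an explicit base case and a modulo-$T$ counter), not substantive.
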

\begin{proof}
The proof is provided in the appendix. 
\end{proof}
\medskip
In the rest of the paper, we refer to \eqref{eq:PeriodicCUSUMrecur} to as the Periodic-CUSUM algorithm. Towards proving the optimality of the Periodic-CUSUM scheme, we obtain a universal lower bound on the performance of any stopping time 
for detecting changes in i.p.i.d. processes. Define 
\begin{equation}\label{eq:KLnumber}
I = \frac{1}{T}\sum_{i=1}^T D(g_i \; \| \; f_i),
\end{equation}
where $D(g_i \; \| \; f_i)$ is the Kullback-Leibler divergence between the densities $g_i$ and $f_i$. We assume 
that
$$
D(g_i \; \| \; f_i) < \infty, \quad \forall i=1, 2, \cdots, T,
$$
and
$$
0 < D(g_i \; \| \; f_i), \quad \text{ for some } i=1, 2, \cdots, T.
$$

\begin{theorem}\label{thm:LB}
Let the information number $I$ as defined in \eqref{eq:KLnumber} satisfy $0 < I < \infty$. Then, for any stopping time $\tau$ satisfying the false alarm constraint $\Expect_\infty[\tau] \geq \beta$, we have as $\beta \to \infty$
\begin{equation}\label{eq:LB}
\begin{split}
\sup_{\nu\geq 1} \; \text{ess} &\sup \Expect_{\nu} [\tau - \nu | X_1, \cdots, X_{\nu-1}] \\
& \geq  \sup_{\nu\geq 1} \; \Expect_{\nu} [\tau - \nu | \tau \geq \nu]  \; \geq \; \frac{\log \beta}{I} (1 + o(1)),
\end{split}
\end{equation}
where an $o(1)$ term is one that goes to zero in the limit as $\beta \to \infty$. 
\end{theorem}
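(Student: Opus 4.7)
The plan is to invoke the general asymptotic lower bound for non-i.i.d.\ change detection problems due to Lai \cite{lai-ieeetit-1998}, specialized to the i.p.i.d.\ setting. The first inequality in \eqref{eq:LB} is immediate, since the essential supremum of a nonnegative random variable dominates its expectation; so it suffices to establish the bound for the Pollak metric.

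Write $Z_i = \log(g_i(X_i)/f_i(X_i))$ for the instantaneous log-likelihood ratio. Lai's machinery reduces the proof of the lower bound to verifying the following uniform weak law: for every $\delta > 0$,
\begin{equation*}
\lim_{n \to \infty} \sup_{k \geq 1} \text{ess} \sup \Prob_k\!\left(\max_{1 \leq t \leq n} \sum_{i=k}^{k+t-1} Z_i \geq I(1+\delta) n \,\Big|\, X_1,\ldots,X_{k-1}\right) = 0.
\end{equation*}
Once this condition is in place, a change-of-measure argument via Wald's likelihood identity, combined with Markov's inequality and the false alarm constraint $\Expect_\infty[\tau] \geq \beta$, yields $\sup_\nu \Expect_\nu[\tau - \nu | \tau \geq \nu] \geq (\log\beta)/I \cdot (1+o(1))$. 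I will treat this reduction as a known ingredient and focus on the condition above.

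To verify the uniform weak law, I would exploit two features of the i.p.i.d.\ model. First, under $\Prob_k$ the post-change observations $X_k, X_{k+1}, \ldots$ are independent of $X_1, \ldots, X_{k-1}$, so the conditional probability collapses to the unconditional one and the essential supremum is vacuous. Second, the sequence $\{Z_{k+j}\}_{j \geq 0}$ is independent under $\Prob_k$ with marginal laws that depend on the starting index only through $k \bmod T$; in particular, any block of $T$ consecutive terms contributes total mean $TI$ by the definition of $I$. The classical SLLN for independent summands (with bounded means) then gives $n^{-1}\sum_{i=k}^{k+n-1} Z_i \to I$ almost surely under $\Prob_k$, and the maximal version follows from the elementary fact that $n^{-1}\max_{t \leq n} S_t$ shares this limit whenever the partial sums $S_t$ obey a SLLN with positive limit.

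The main obstacle, and the place where the periodic structure is essential, is the uniformity in $k$ inside the supremum. In a general non-i.i.d.\ setting one would need to control arbitrarily long tail sequences uniformly over the starting point, which can be delicate. In the i.p.i.d.\ case, however, the law of the post-change process under $\Prob_k$ depends on $k$ only through $k \bmod T$, so the supremum over $k \geq 1$ collapses to a maximum over the $T$ starting phases and uniformity becomes automatic. Combining the three ingredients \textemdash{} triviality of the essential supremum, blockwise SLLN for independent variables, and finite-phase uniformity \textemdash{} verifies Lai's condition and delivers the bound in \eqref{eq:LB}.
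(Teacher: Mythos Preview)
Your proposal is correct and follows essentially the same route as the paper: both reduce the lower bound to Lai's condition from \cite{lai-ieeetit-1998}, verify it by (i) dropping the essential supremum via independence of pre- and post-change observations, (ii) invoking the SLLN for independent periodically distributed summands to get $n^{-1}\sum Z_i \to I$ and its maximal version, and (iii) collapsing the supremum over change points to a maximum over the $T$ starting phases. The arguments are the same in substance and structure.
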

\begin{proof}
The proof is provided in the appendix. 
\end{proof}

We now show that the Periodic-CUSUM scheme \eqref{eq:PeriodicCUSUM}--\eqref{eq:PeriodicCUSUMrecur} is asymptotically optimal for 
both the formulations \eqref{eq:Pollak} and \eqref{eq:Lorden}. 
\begin{theorem}\label{thm:UB}
Let the information number $I$ as defined in \eqref{eq:KLnumber} satisfy $0 < I < \infty$. Then, the  Periodic-CUSUM stopping time $\tau_c$ \eqref{eq:PeriodicCUSUM}--\eqref{eq:PeriodicCUSUMrecur} with $A=\log \beta$ satisfies the false alarm constraint, i.e., 
$$\Expect_\infty[\tau_c] \geq \beta,
$$ 
and as $\beta \to \infty$,
\begin{equation}\label{eq:UB}
\begin{split}
\sup_{\nu\geq 1} \; & \Expect_\nu [\tau_c - \nu | \tau_c \geq \nu]  \\
&\leq \sup_{\nu\geq 1} \; \text{ess} \sup \Expect_\nu [\tau_c - \nu | X_1, \cdots, X_{\nu-1}]  \\
&\leq  \frac{A}{I} (1 + o(1)) = \frac{\log \beta}{I} (1 + o(1)).
\end{split}
\end{equation}
%where the information number $I$ is defined in \eqref{eq:KLnumber}. 
\end{theorem}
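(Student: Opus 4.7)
The plan is to prove the false-alarm and delay bounds separately; the ordering of the two delay quantities in \eqref{eq:UB} is immediate since a conditional expectation is always bounded above by its essential supremum.

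\textbf{False alarm.} For every fixed $k \geq 1$, the process $\Lambda_n^{(k)} := \prod_{i=k}^{n} g_i(X_i)/f_i(X_i)$ is a nonnegative $\Prob_\infty$-martingale with mean one, and the definition \eqref{eq:PeriodicCUSUM} identifies $\exp(W_n) = \max_{1 \leq k \leq n} \Lambda_n^{(k)}$ as the running maximum of these martingales. Combined with the recursion \eqref{eq:PeriodicCUSUMrecur}, which shows that $W_n$ regenerates whenever it falls to or below $0$, a standard excursion argument (Moustakides' bound for the classical i.i.d.\ CUSUM, adapted to non-i.i.d.\ observations as in Lai \cite{lai-ieeetit-1998}) shows that each excursion of $W_n$ away from $\{W \leq 0\}$ crosses level $A$ with probability at most $e^{-A}$ under $\Prob_\infty$. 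Hence $\Expect_\infty[\tau_c] \geq e^A = \beta$ when $A = \log \beta$.

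\textbf{Delay.} The key pathwise inequality is that substituting $k = \nu$ in the maximum in \eqref{eq:PeriodicCUSUM} gives, for every $n \geq \nu$,
\begin{equation*}
W_n \;\geq\; S_n^{(\nu)} \;:=\; \sum_{i=\nu}^{n} \log\frac{g_i(X_i)}{f_i(X_i)}.
\end{equation*}
Consequently $\tau_c \leq \sigma_\nu := \inf\{n \geq \nu : S_n^{(\nu)} > A\}$ on $\{\tau_c \geq \nu\}$. Because $\{X_i\}_{i \geq \nu}$ is independent of $X_1,\dots,X_{\nu-1}$ under $\Prob_\nu$, the conditional distribution of $\sigma_\nu - \nu$ does not depend on the prechange history, so
\begin{equation*}
\text{ess} \sup \Expect_\nu[\tau_c - \nu \mid X_1,\dots,X_{\nu-1}] \;\leq\; \Expect_\nu[\sigma_\nu - \nu].
\end{equation*}
To bound the right-hand side, I reduce to an i.i.d.\ random walk by blocking: the block sums $Y_j := \sum_{i=\nu+(j-1)T}^{\nu+jT-1} Z_i$ of the per-step log-likelihood ratios $Z_i := \log(g_i(X_i)/f_i(X_i))$ are i.i.d.\ under $\Prob_\nu$ with common mean $TI$, since every block traverses a full period irrespective of $\nu \bmod T$. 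Classical renewal/Wald theory for i.i.d.\ random walks, together with overshoot control using $\max_i \Expect_\nu|Z_i| < \infty$ (guaranteed by $D(g_i \| f_i) < \infty$), yields $\Expect_\nu[\sigma_\nu - \nu] \leq A/I + O(1) = (A/I)(1+o(1))$, and the sup over $\nu$ is the same bound because there are only $T$ distinct starting phases.

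\textbf{Main obstacle.} The hardest step is the last one: upgrading the $\Prob_\nu$-a.s.\ strong law $n^{-1}\sum_{i=\nu}^{\nu+n-1} Z_i \to I$ into an $L^1$ first-passage-time bound sharp enough to yield the constant $1/I$ in the leading order, while maintaining uniformity in the phase $\nu \bmod T$. The blocking reduction is the natural device: it converts the i.p.i.d.\ walk into a genuine i.i.d.\ walk at the block scale, which allows classical renewal theory to deliver the $(1+o(1))$-tight constant and absorbs any phase-dependent contributions into a uniform $O(1)$ remainder that is swept into the $o(1)$ term.
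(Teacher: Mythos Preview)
Your proposal is correct. The false-alarm argument is essentially the same as the paper's: both invoke the $\Prob_\infty$-martingale structure of the likelihood ratios and the resulting standard CUSUM bound (the paper cites Lai \cite{lai-ieeetit-1998} directly; you phrase it via the excursion/Moustakides viewpoint, but the content is identical).

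For the delay bound, however, you take a genuinely different route. The paper's proof verifies the uniform convergence-in-probability condition
\[
\lim_{n\to\infty}\ \sup_{k\geq\nu\geq 1}\ \esssup\ \Prob_\nu\!\left(\tfrac{1}{n}\sum_{i=k}^{k+n}Z_i\leq I-\delta\ \Big|\ X_1,\dots,X_{\nu-1}\right)=0
\]
by collapsing the supremum to the finitely many phases $1\leq\nu\leq T$, $\nu\leq k\leq\nu+T$, and then invokes Theorem~4 of Lai \cite{lai-ieeetit-1998} as a black box. Your argument is instead self-contained: you bound $\tau_c$ pathwise by the one-sided SPRT $\sigma_\nu$, then reduce the i.p.i.d.\ walk to an i.i.d.\ one by blocking in periods of length $T$, and apply renewal theory directly. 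What your route buys is transparency---the reader sees exactly where periodicity enters (the block sums are i.i.d.) and does not need to consult Lai's general non-i.i.d.\ framework. What the paper's route buys is brevity and a template that would extend to dependent periodic models satisfying Lai's abstract condition. Two minor remarks: your claimed $O(1)$ overshoot would need a second-moment assumption, but the stated $(1+o(1))$ conclusion only needs the elementary renewal theorem under the first-moment condition you have, so the final bound stands; and your assertion that $D(g_i\|f_i)<\infty$ implies $\Expect_\nu|Z_i|<\infty$ is correct (since $\Expect_{g_i}[(\log(g_i/f_i))^{-}]\leq 1$ always), though it is worth stating this explicitly.
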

\begin{proof}
The proof is provided in the appendix. 
\end{proof}

%We note that the algorithm is also optimal for various other formulations studied in the literature \cite{lai-ieeetit-1998}. We do not report these here due to a paucity of space. 

\section{Change Detection With Unknown Post-Change I.P.I.D. Law}\label{sec:UnknownPost}
In the previous section, we assumed that the post-change law $(g_1, \cdots, g_T)$ is known to the decision maker. This information 
was used to design the Periodic-CUSUM algorithm \eqref{eq:PeriodicCUSUMrecur}. In practice, this information may not be available. We 
now show that if the post-change law belongs to a finite set of $M$ possible distributions,
$(g_1^{(1)}, \cdots, g_T^{(1)}), \cdots, (g_1^{(M)}, \cdots, g_T^{(M)})$, then an asymptotically optimal test can be designed. 

For $\ell \in \{1, \cdots, M\}$, define the statistic
\begin{equation}\label{eq:PeriodicCUSUMrecurell}
W_{n+1}^{(\ell)} = \left(W_n^{(\ell)}\right)^{+} + \log \frac{g_{n+1}^{(\ell)}(X_{n+1})}{f_{n+1}(X_{n+1})},
\end{equation}
and the stopping rule
\begin{equation}
\tau_{c\ell} = \inf \left\{n \geq 1: W_{n}^{(\ell)}  \geq \log (\beta M)\right\},
\end{equation}
which is the Periodic-CUSUM stopping rule for the $\ell$th post-change law $(g_1^{(\ell)}, \cdots, g_T^{(\ell)})$.
Now, define
\begin{equation}
\tau_{cm} = \inf \left\{n \geq 1: \max_{1 \leq \ell \leq M} W_{n}^{(\ell)}  \geq \log (\beta M)\right\}.
\end{equation}
Then, note that
\begin{equation}\label{eq:UBmax}
\tau_{cm} = \min_{\ell =1, \cdots, M} \tau_{c\ell}. 
\end{equation}
The stopping rule $\tau_{cm}$ is the stopping rule under which we stop the first time any of the $\ell$ Periodic-CUSUMs is above the threshold $\log(\beta M)$.

We now show that this stopping rule is optimal for both Lorden's and Pollak's criteria. Towards this end, we define a
Shiryaev-Roberts-type statistic
\begin{equation}
R_n = \sum_{\ell=1}^M \sum_{k=1}^n \prod_{i=k}^n \frac{g_{i}^{(\ell)}(X_{i})}{f_{i}(X_{i})}
\end{equation}
and a Shiryaev-Roberts-type stopping rule
\begin{equation}
\tau_{sr} = \inf \left\{n \geq 1: R_n \geq \beta M\right\}.
\end{equation}
Note that 
\begin{equation}\label{eq:CUSUMleqSR}
\tau_{sr} \leq \tau_{cm}.
\end{equation}
We have the following theorem.
\begin{theorem}\label{thm:unknownpost}
The process $\{R_n - nM\}$ is a $\Prob_\infty$ martingale. If 
$
\Expect_\infty[R_{\tau_{sr}}] < \infty,
$
then
$$
\Expect_\infty[\tau_{cm}] \geq \Expect_\infty[\tau_{sr}] \geq \beta. 
$$
Further, if $(g_1^{(\ell)}, \cdots, g_T^{(\ell)})$ is the true post-change i.p.i.d. law and 
\begin{equation}\label{eq:KLnumberell}
I_\ell = \frac{1}{T}\sum_{i=1}^T D(g_i^{(\ell)} \; \| \; f_i),
\end{equation}
then as $\beta \to \infty$,
\begin{equation}\label{eq:UBMax}
\begin{split}
\sup_{\nu\geq 1} \; & \Expect_\nu [\tau_{cm} - \nu | \tau_{cm} \geq \nu]  \\
&\leq \sup_{\nu\geq 1} \; \text{ess} \sup \Expect_\nu [\tau_{cm} - \nu | X_1, \cdots, X_{\nu-1}]  \\
&\leq   \frac{\log \beta}{I_\ell} (1 + o(1)).
\end{split}
\end{equation}
\end{theorem}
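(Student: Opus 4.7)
The proof naturally splits into three parts matching the three assertions of Theorem~\ref{thm:unknownpost}: the martingale property of $\{R_n-nM\}$, the false alarm lower bound, and the delay upper bound. My plan is to handle them in that order, since the first feeds directly into the second and the third is essentially an application of Theorem~\ref{thm:UB}.

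For the martingale property, I would write $R_n=\sum_{\ell=1}^M\sum_{k=1}^n L_n^{(\ell)}(k)$ with $L_n^{(\ell)}(k)=\prod_{i=k}^n g_i^{(\ell)}(X_i)/f_i(X_i)$, and decompose
\begin{equation*}
R_{n+1}=\sum_{\ell=1}^M \frac{g_{n+1}^{(\ell)}(X_{n+1})}{f_{n+1}(X_{n+1})}+\sum_{\ell=1}^M\sum_{k=1}^n L_n^{(\ell)}(k)\,\frac{g_{n+1}^{(\ell)}(X_{n+1})}{f_{n+1}(X_{n+1})}.
\end{equation*}
Under $\Prob_\infty$ the variable $X_{n+1}$ has density $f_{n+1}$ and is independent of $\mathcal{F}_n=\sigma(X_1,\ldots,X_n)$, so $\Expect_\infty[g_{n+1}^{(\ell)}(X_{n+1})/f_{n+1}(X_{n+1})\mid \mathcal{F}_n]=1$ for each $\ell$. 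Taking the conditional expectation term by term gives $\Expect_\infty[R_{n+1}\mid\mathcal{F}_n]=M+R_n$, which is exactly the martingale identity $\Expect_\infty[R_{n+1}-(n+1)M\mid\mathcal{F}_n]=R_n-nM$.

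For the false alarm bound, I would apply the optional stopping theorem to the martingale $\{R_n-nM\}$ at the stopping time $\tau_{sr}$. The hypothesis $\Expect_\infty[R_{\tau_{sr}}]<\infty$ combined with standard truncation arguments (stopping at $\tau_{sr}\wedge N$ and letting $N\to\infty$, using monotone/dominated convergence together with $R_n\geq 0$) yields $\Expect_\infty[R_{\tau_{sr}}]=M\,\Expect_\infty[\tau_{sr}]$. Since the definition of $\tau_{sr}$ forces $R_{\tau_{sr}}\geq\beta M$ on $\{\tau_{sr}<\infty\}$, we obtain $\Expect_\infty[\tau_{sr}]\geq\beta$, and the inequality $\Expect_\infty[\tau_{cm}]\geq\Expect_\infty[\tau_{sr}]$ then follows immediately from \eqref{eq:CUSUMleqSR}.

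For the delay upper bound, the key observation is that $\tau_{cm}=\min_\ell \tau_{c\ell}$ by \eqref{eq:UBmax}, so in particular $\tau_{cm}\leq\tau_{c\ell}$ for the true post-change index $\ell$. Consequently, for every change point $\nu$ and every history $X_1,\ldots,X_{\nu-1}$,
\begin{equation*}
\Expect_\nu[\tau_{cm}-\nu\mid X_1,\ldots,X_{\nu-1}]\leq \Expect_\nu[\tau_{c\ell}-\nu\mid X_1,\ldots,X_{\nu-1}].
\end{equation*}
Now $\tau_{c\ell}$ is precisely the Periodic-CUSUM rule of Section~\ref{sec:SingleSeqOpt} with post-change law $(g_1^{(\ell)},\ldots,g_T^{(\ell)})$ and threshold $A=\log(\beta M)$. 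Theorem~\ref{thm:UB} therefore gives the essential-supremum bound $\log(\beta M)/I_\ell\,(1+o(1))$, and since $\log(\beta M)=\log\beta+\log M=\log\beta\,(1+o(1))$ as $\beta\to\infty$, the stated bound $(\log\beta)/I_\ell\,(1+o(1))$ follows.

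The step I expect to require the most care is the optional stopping argument: verifying that the assumed integrability $\Expect_\infty[R_{\tau_{sr}}]<\infty$ is enough to justify passing the martingale identity to the limit, rather than needing a stronger uniform integrability condition. The martingale computation and the reduction of the delay analysis to Theorem~\ref{thm:UB} via $\tau_{cm}\leq\tau_{c\ell}$ are mechanical once the decomposition is in place.
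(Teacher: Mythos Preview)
Your proposal is correct and follows essentially the same route as the paper: direct verification of the martingale property, an optional stopping argument (the paper likewise flags the integrability issue and handles it with an explicit tail/uniform-integrability check under the harmless assumption $\Expect_\infty[\tau_{sr}]<\infty$), and the delay bound via $\tau_{cm}\le\tau_{c\ell}$ together with Theorem~\ref{thm:UB} applied at threshold $A=\log(\beta M)$. Your identification of the optional stopping step as the one requiring care matches exactly where the paper invests its effort.
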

\begin{proof}
The proof is provided in the appendix. 
\end{proof}

Given the lower bound in Theorem~\ref{thm:LB}, the stopping rule $\tau_{cm}$ is thus asymptotically optimal with respect to the criteria of Lorden and Pollak, 
uniformly over each possible post-change hypothesis $(g_1^{(\ell)}, \cdots, g_T^{(\ell)})$, $\ell = 1, \cdots, M$. 
%The condition $\Expect_\infty[R_{\tau_{sr}}] < \infty$ is equivalent to saying that the mean overshoot is finite. This assumption is satisfied for example 
%if the likelihood ratios are bounded or may be satisfied if the LLRs have finite variance. 

\section{Change Detection in a Distributed I.P.I.D.  Setting}\label{sec:DistDetect}
In the previous sections, we assumed that there is a single sequence of random variable $\{X_n\}$. In many applications, as in the problem of event detection in NYC data,
the sensors are distributed. An event can occur near any one of the sensors. Thus, it is of interest to develop algorithms to detect event using multiple streams of
data. In this section, we obtain optimal algorithms for detecting changes when the observation process in each stream is an i.p.i.d. process. 

Let there be $M$ independent streams of data and let $\{X_{n,\ell}\}_{n=1}^\infty$ be the i.p.i.d. observation process of the $\ell$th stream with law 
$(f_1^{(\ell)}, \cdots, f_T^{(\ell)})$. At the change point $\nu$, the law of the i.p.i.d. process in one of the streams changes from 
$(f_1^{(\ell)}, \cdots, f_T^{(\ell)})$ to $(g_1^{(\ell)}, \cdots, g_T^{(\ell)})$. The objective is to detect this change in distribution. 

For $\ell \in \{1, \cdots, M\}$, define the statistic
\begin{equation}\label{eq:distPeriodicCUSUMrecurell}
D_{n+1}^{(\ell)} = \left(D_n^{(\ell)}\right)^{+} + \log \frac{g_{n+1}^{(\ell)}(X_{n+1, \ell})}{f_{n+1}^{(\ell)}(X_{n+1, \ell})},
\end{equation}
and 
\begin{equation}
\tau_{dm} = \inf \left\{n \geq 1: \max_{1 \leq \ell \leq M} D_{n}^{(\ell)}  \geq \log (\beta M)\right\}.
\end{equation}
Note that this rule is different from that discussed in \eqref{eq:PeriodicCUSUMrecurell} because
here the statistic $D_{n+1}^{(\ell)}$ utilizes a different stream of observations and a different pre-change distribution for each stream index $\ell$. 

Now, define the Shiryaev-Roberts statistic
\begin{equation}
S_n = \sum_{\ell=1}^M \sum_{k=1}^n \prod_{i=k}^n \frac{g_{i}^{(\ell)}(X_{i, \ell})}{f_{i}^{(\ell)}(X_{i, \ell})}
\end{equation}
and the Shiryaev-Roberts stopping rule
\begin{equation}
\tau_{srd} = \inf \left\{n \geq 1: S_n \geq \beta M\right\}.
\end{equation}
We have the following theorem.
\begin{theorem}\label{thm:dist}
If 
$
\Expect_\infty[S_{\tau_{srd}}] < \infty,
$
then
$$
\Expect_\infty[\tau_{dm}] \geq \beta. 
$$
Further, if the change occurs in the $\ell$th stream and 
\begin{equation}\label{eq:distKLnumberell}
I_\ell = \frac{1}{T}\sum_{i=1}^T D(g_i^{(\ell)} \; \| \; f_i^{(\ell)}),
\end{equation}
then as $\beta \to \infty$,
\begin{equation}\label{eq:distUBMax}
\begin{split}
&\sup_{\nu\geq 1} \;  \Expect_\nu [\tau_{dm} - \nu | \tau_{dm} \geq \nu]  \\
&\leq \sup_{\nu\geq 1} \; \text{ess} \sup \Expect_\nu [\tau_{dm} - \nu | X_{1,\ell}, \cdots, X_{\nu-1, \ell}; \; \ell=1,\cdots,M]  \\
&\leq  \frac{\log \beta}{I_\ell} (1 + o(1)).
\end{split}
\end{equation}
\end{theorem}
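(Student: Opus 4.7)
The plan is to follow the two-step template used in the proof of Theorem 3, adapted to the fact that each stream $\ell$ now has its own pre-change law $(f_1^{(\ell)},\ldots,f_T^{(\ell)})$. The first step is a martingale argument establishing the false-alarm bound $\Expect_\infty[\tau_{dm}] \geq \beta$, and the second is a reduction of the delay analysis for $\tau_{dm}$ to the single-stream result of Theorem 2.

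For the false-alarm bound, I would first verify that $\{S_n - nM\}$ is a $\Prob_\infty$-martingale with respect to the natural filtration generated by all $M$ streams. This reduces to the observation that, under $\Prob_\infty$, each stream is i.p.i.d.\ with its own law, so for every $\ell$ and every $i$ the conditional mean of $g_i^{(\ell)}(X_{i,\ell})/f_i^{(\ell)}(X_{i,\ell})$ equals one; hence each summand indexed by $k \leq n$ propagates as a martingale increment, while the freshly initiated $k = n+1$ summand in each stream contributes a deterministic $+1$, producing a drift of exactly $M$ per step. Given the hypothesis $\Expect_\infty[S_{\tau_{srd}}]<\infty$, the optional sampling theorem (applied first to $\tau_{srd}\wedge N$ and then passed to the limit) yields $\Expect_\infty[S_{\tau_{srd}}]=M\,\Expect_\infty[\tau_{srd}]$, which combined with $S_{\tau_{srd}} \geq \beta M$ gives $\Expect_\infty[\tau_{srd}] \geq \beta$. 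Finally, since $S_n$ dominates each of its nonnegative summands and $e^{D_n^{(\ell)}}$ is itself one such summand by Lemma 1, one has $S_n \geq \max_\ell e^{D_n^{(\ell)}}$, so $S_n$ crosses the level $\beta M$ no later than $\max_\ell e^{D_n^{(\ell)}}$ does, i.e.\ $\tau_{srd} \leq \tau_{dm}$, and therefore $\Expect_\infty[\tau_{dm}] \geq \beta$.

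For the delay bound, let $\ell^*$ denote the stream in which the change actually occurs and define the single-stream Periodic-CUSUM stopping time
\begin{equation*}
\tau_c^{(\ell^*)} = \inf\{n \geq 1 : D_n^{(\ell^*)} \geq \log(\beta M)\}.
\end{equation*}
Clearly $\tau_{dm} \leq \tau_c^{(\ell^*)}$, so both Pollak's and Lorden's delay metrics for $\tau_{dm}$ are dominated by the corresponding metrics for $\tau_c^{(\ell^*)}$. Viewing stream $\ell^*$ in isolation, it is a single i.p.i.d.\ change-detection problem with information number $I_{\ell^*}$ and threshold $A = \log(\beta M)$, so Theorem 2 yields
\begin{equation*}
\sup_{\nu\geq 1} \text{ess} \sup \Expect_\nu[\tau_c^{(\ell^*)} - \nu \mid X_{1,\ell^*},\ldots,X_{\nu-1,\ell^*}] \leq \frac{\log(\beta M)}{I_{\ell^*}}(1+o(1)) = \frac{\log\beta}{I_{\ell^*}}(1+o(1)),
\end{equation*}
since $\log(\beta M)/\log\beta \to 1$. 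Because $\tau_c^{(\ell^*)}$ depends only on stream $\ell^*$ and the streams are independent under $\Prob_\nu$, enlarging the conditioning to include the histories of the other streams preserves the bound, which delivers the desired inequality for $\tau_{dm}$.

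I expect the main technical obstacle to lie in justifying the optional sampling step for the unbounded stopping time $\tau_{srd}$. The bounded-stopping identity $\Expect_\infty[S_{\tau_{srd}\wedge N} - (\tau_{srd}\wedge N)M] = 0$ must be passed to the limit $N \to \infty$; the hypothesis $\Expect_\infty[S_{\tau_{srd}}]<\infty$ together with monotone convergence on $\tau_{srd}\wedge N \uparrow \tau_{srd}$ and dominated convergence on $S_{\tau_{srd}\wedge N}$ should handle this, but the verification of integrability and the interchange of limit and expectation require some care. Beyond this, the remaining steps are essentially stream-indexed reruns of arguments already established in Theorems 2 and 3.
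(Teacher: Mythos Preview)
Your proposal is correct and follows exactly the route the paper intends: the paper's own proof of this theorem is simply ``similar to the proof of Theorem~\ref{thm:unknownpost} and is skipped,'' and your two-step martingale/optional-sampling argument for the false-alarm bound together with the reduction $\tau_{dm}\leq \tau_c^{(\ell^*)}$ for the delay bound is precisely that adaptation. Your discussion of the optional-sampling justification is, if anything, more careful than the paper's terse treatment in Theorem~\ref{thm:unknownpost}.
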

\begin{proof}
The proof is similar to the proof of Theorem~\ref{thm:unknownpost} and is skipped. 
\end{proof}
In view of the lower bound obtained in Theorem~\ref{thm:LB}, the above theorem shows that the stopping rule $\tau_{dm}$ is asymptotically optimal 
for each post-change scenario. 

%\vspace{-1cm}
\section{Learning a Parametric I.P.I.D. Model}\label{sec:Paramipid}
In practice, learning pre- and post-change laws $(f_1, \cdots, f_T)$ and $(g_1, \cdots, g_T)$ can be hard. Thus, data will be typically modeled using a parametric i.p.i.d. model. 
We recall the definition of a parametric i.p.i.d. process from \eqref{eq:paraDistmod}:
\begin{definition}
Let $\{X_n\}$ be an independent sequence of random variables with distribution in a parametric family with parameters $\{\theta_n\}$: 
\begin{equation}
\begin{split}
X_n & \stackrel{\text{ind}}{\sim} p(\cdot; \theta_n), \; n \geq 1. \\
\end{split}
\end{equation}
The process $\{X_n\}$ is called a parametric i.p.i.d. process if there is an integer $T > 0$ such that the parameter sequence $\{\theta_n\}$ is periodic with period $T$:
\begin{equation}
\begin{split}
\theta_n &= \theta_{n+T},\; n \geq 1.
\end{split}
\end{equation}
\end{definition}
If the statistical properties of the observation process 
repeats every day, and if the data is collected once per hour, then in the above model, the period $T$ would correspond to $T= 24$ hours in a day, and the variables $X_1, \cdots, X_{T}$ would correspond to the data collected in the first day. In many applications, 
the data is often collected more frequently, at the rate of many samples per second. In such applications, $T$ could be, for example, equal to $24 \times 60 \times 60 \times m$, where $m$ is the number of samples collected per second. In practice, it may be hard to learn a large number of parameters, and detect changes in them, especially if the post-change parameters are not known. The learning process can be made simpler by making additional assumptions on the model. We discuss one such simplification now. 

In order to control the complexity of the problem, we may
assume that the parameters are divided into batches and parameters in each batch are approximately constant. For example, a batch may correspond to data collected in an hour 
and the average count of objects may not change in an hour. Mathematically, we assume that in each cycle or period of length $T$, the vector of parameters
 $\{\theta_k\}_{k=1}^T$ is partitioned into $E$ batches or episodes. 
Specifically, for $N_0=0$ and positive integers $\{N_e\}_{e=1}^E$ we define 
$B_e = \{N_{e-1} + 1, \cdots, N_e\}$ such that 
$\{1, \cdots, T\} = \cup_{e=1}^E B_e, \quad B_e \cap B_f = \emptyset, \text{ for } e \neq f.$
For $e\in \{1, \cdots, E\}$, we define
$\theta_{B_e} = (\theta_{N_{e-1}+1}, \cdots, \theta_{N_e}).$
%$B_1, \cdots, B_E$ 
Thus, $\{\theta_k\}_{k=1}^T$ is partitioned as
\begin{equation}\label{eq:BatchModel}
\overbrace{\theta_1, \cdots, \theta_{N_1}}^{\theta_{B_1}},\overbrace{\theta_{N_1+1}, \cdots, \theta_{N_2}}^{\theta_{B_2}}, \cdots, \overbrace{\theta_{N_{E-1}+1}, \cdots, \theta_{N_E}}^{\theta_{B_E}}. 
\end{equation}
Note that we have $T = \sum_{e=1}^E |B_e|$. 

We further assume a step model for parameters. Under this assumption, the parameters remain constant within a batch resulting in the step-wise constant sequence model
\begin{equation}\label{eq:StepModel}
\overbrace{\theta^{(1)}, \cdots, \theta^{(1)}}^{\theta_{B_1}},\overbrace{\theta^{(2)}, \cdots, \theta^{(2)}}^{\theta_{B_2}}, \cdots, \overbrace{\theta^{(E)}, \cdots, \theta^{(E)}}^{\theta_{B_E}}. 
\end{equation}
That is $\theta^{(1)} = \theta_1=\cdots=\theta_{N_1}$, $\theta^{(2)} = \theta_{N_1+1}=\cdots= \theta_{N_2}$, and so on. Thus, if the batch sizes are large, there are only $E \ll T$ parameters to learn from the data. Also, we have $|B_e|$ samples for batch $e$. 
The objective is then to observe the process $\{X_n\}$ over time and detect any changes in the parameters $\theta^{(1)}, \cdots, \theta^{(E)}$. 

Note that for the regression model, the batch assumption above is equivalent to approximating the smooth function $\theta(t)$ by a step function. 

\section{Numerical Results}\label{sec:Numerical}
We now apply the stopping rule $\tau_c$ in \eqref{eq:PeriodicCUSUMstop} to the NYC Instagram count data. 
In Fig.~\ref{fig:AllTestConcatenatedCount}, 
we have plotted the evolution of the test statistic $W_n$ applied to the Instagram counts collected near nine CCTV cameras. Two of these CCTV cameras 
are near or on the path of the run (called on-path cameras) and the rest are outside the race path (called off-path cameras). Out of the two on-path cameras, 
one is near the north end of the run and another is at the south end of the run. 
To obtain Fig.~{fig:AllTestConcatenatedCount}, we arranged the Instagram counts near each CCTV camera in a concatenated fashion, with labeled segments separated via red vertical lines. Each day has $6598$ samples. To compute the statistic, we divided the data for each day into four batches, with the first three batches being of length $1500$. We modeled the data as a sequence of Poisson random variables. We used the count data from Sept. 10 (one of the non-event days) to learn the averages of these Poisson random variables for each of the four batches. We assumed that there is only one post-change parameter per batch with value equal to three times the value of the normal parameter for that batch. We then applied the test $\tau_c$ to all the four days of data. 
As observed in the top figure in Fig.~\ref{fig:AllTestConcatenatedCount}, the test statistic stays close to zero for all the cameras on the non-event days. On the event day (Sept. 24th), the test statistic grows for the on-path cameras indicating that a deviation from the baseline was detected. The most significant change was observed near the CCTV camera near the north end of the run. In the bottom figure of Fig.~\ref{fig:AllTestConcatenatedCount}, we have replotted the test statistic applied to the on-path camera near the south end of the run. 
%\begin{figure}
%\centering
%%\hspace{0.5cm}
%\begin{subfigure}{.35\textwidth}
%  \centering
%  \includegraphics[width=\linewidth]{Figs/GlobalSIP_WnConcatAll_On_Path.png}
%  \caption{Test statistic $W_n$  for the \\ on-path camera.  }
%  \label{fig:AllTestConcatenatedCount}
%\end{subfigure}%
%%\hspace{1em}
%\\
%\begin{subfigure}{.35\textwidth}
%  \centering
%  \includegraphics[width=\linewidth]{Figs/GlobalSIP_WnConcatInstagram_On_Path.png}
%  \caption{$W_n$ for Instagram \\ replotted.}
%  \label{fig:InstTestConcatenatedCount}
%\end{subfigure}
%\caption{Plots of test statistic $W_n$ from  \eqref{eq:PeriodicCUSUMrecur}.}
%\label{fig:Numerical}%\vspace{-2cm}
%\end{figure}
%\begin{figure}
%\centering
%\includegraphics[scale=0.5]{TSPFigs/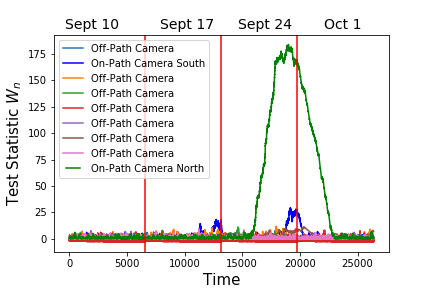}
%\includegraphics[scale=0.5]{TSPFigs/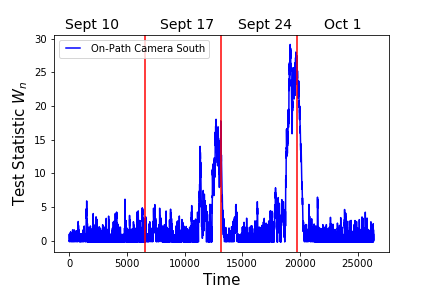}
%\label{fig:Numerical}
%\caption{Plots of test statistic $W_n$ applied to streams of Instragram counts collected near nine CCTV cameras. The on-path cameras are on the path of the 5K run. The statistic for on-path camera south is replotted in the bottom figure. }
%\label{fig:AllTestConcatenatedCount}
%\end{figure}
\begin{figure}


\centering
\includegraphics[scale=0.5]{TSP_Wn.png}
\includegraphics[scale=0.5]{TSP_Wn_South.png}
\label{fig:Numerical}
\caption{Plots of test statistic $W_n$ applied to streams of Instragram counts collected near nine CCTV cameras. The on-path cameras are on the path of the 5K run. The statistic for on-path camera south is replotted in the bottom figure. }
\label{fig:AllTestConcatenatedCount}
\end{figure}

In Fig.~\ref{fig:DelayFAR}, we have plotted the delay $\Expect_1[\tau_c]$ versus log of mean time to false alarm $\log \Expect_\infty[\tau_c]$
for the periodic-CUSUM algorithm. The simulation plot was obtained for the following set of pre- and post-change parameters:
\begin{equation}\label{eq:delayMFApara}
\begin{split}
f_1 &= f_2 = \mathcal{N}(0,1),\\
g_1 &= \mathcal{N}(1,1),\\
g_2 &= \mathcal{N}(0.5,1). 
\end{split}
\end{equation}
To obtain each of the five points in the figure for simulations, the value of the threshold $A$ in \eqref{eq:PeriodicCUSUMstop} was set to 
values $3, 4,5,5.5$ and $6$ and both delay and false alarm estimates were obtained using $5000$ sample paths. The analysis plot was obtained by 
dividing the threshold by the average KL-divergence between the densities. 
In Fig.~\ref{fig:WnEvolution}, we have plotted a typical evolution of the algorithm applied 
to simulated data. This plot was obtained for the same set of parameters specified in \eqref{eq:delayMFApara}. 

%\begin{figure}
%\includegraphics[scale=0.5]{Figs/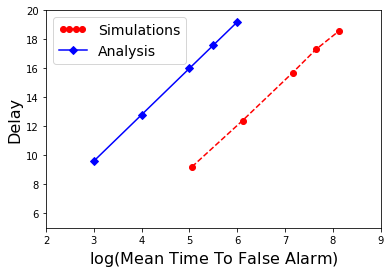}
%\caption{Performance trade-off curves for the periodic-CUSUM algorithm. }
%\label{fig:DelayFAR}
%\end{figure}
%\begin{figure}
%\includegraphics[scale=0.5]{Figs/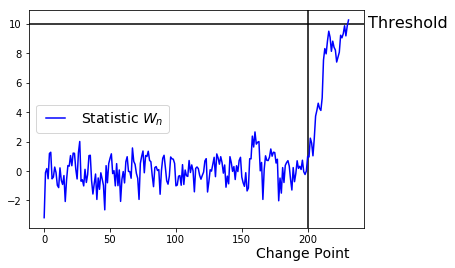}
%\caption{Plots of test statistic $W_n$ from  \eqref{eq:PeriodicCUSUMrecur} applied to simulated data with threshold $A=10$.}
%\label{fig:WnEvolution}
%\end{figure}
\begin{figure}
\includegraphics[scale=0.5]{}
\caption{Performance trade-off curves for the periodic-CUSUM algorithm. }
\label{fig:DelayFAR}
\end{figure}
\begin{figure}
\includegraphics[scale=0.5]{}
\caption{Plots of test statistic $W_n$ from  \eqref{eq:PeriodicCUSUMrecur} applied to simulated data with threshold $A=10$.}
\label{fig:WnEvolution}
\end{figure}
%In future, we will apply the algorithms to other multi-modal datasets to test their effectiveness. We will also study optimality of the proposed algorithms 
%for Lorden's and Pollak's criteria. 

\section{Conclusions}
We developed a minimax asymptotic theory for quickest detection of changes in i.p.i.d. models. We also studied the cases where the post-change i.p.i.d. law is unknown or 
where there are multiple streams of data. The algorithms developed were applied to count data extracted from multimodal data collected around a 5K run from NYC to detect
the 5K run. The theoretical results show that many of the results valid for i.i.d. data models are also true for i.p.i.d. models. An important question one 
can ask motivated from \cite{mous-astat-1986} is whether the periodic-CUSUM algorithm is exactly optimal for Lorden's formulation \cite{lord-amstat-1971}. The answer to this question, we believe, is negative. Our belief 
is based on the Bayesian analysis done in \cite{bane-ieeeit-2019} where we observed that single-threshold policies are not strictly optimal. In fact, in the Bayesian setting, the 
optimal algorithm has periodic thresholds. We conjecture that even for the minimax settings an algorithm with periodic thresholds will be strictly optimal.

%\vspace{-0.4cm}
\appendix
%\vspace{-0.2cm}

\begin{proof}[Proof of Lemma~\ref{thm:recurs}] For any sequence $\{Z_i\}$ of random variables, we can write
\begin{equation}
\begin{split}
\max_{1 \leq k \leq n+1} & \sum_{i=k}^{n+1} Z_i  = \max\left\{ \max_{1 \leq k \leq n} \sum_{i=k}^{n+1} Z_i, \; Z_{n+1}\right\}\\
=&\max\left\{ \max_{1 \leq k \leq n} \left(\sum_{i=k}^{n} Z_i + Z_{n+1}\right), \; Z_{n+1}\right\}\\
=&\max\left\{ \max_{1 \leq k \leq n} \left(\sum_{i=k}^{n} Z_i \right) + Z_{n+1}, \; Z_{n+1}\right\}\\
=&\max\left\{ \max_{1 \leq k \leq n} \left(\sum_{i=k}^{n} Z_i \right) , \; 0\right\} + Z_{n+1}.
\end{split}
\end{equation}
Substituting $Z_i = \log \frac{g_{i}(X_{i})}{f_{i}(X_{i})}$ into the above equation we get the desired recursion for $W_n$ in \eqref{eq:PeriodicCUSUM}: 
$$
W_{n+1} = W_n^{+} + \log \frac{g_{n+1}(X_{n+1})}{f_{n+1}(X_{n+1})}.
$$
Note that the increment term $\log \frac{g_{n+1}(X_{n+1})}{f_{n+1}(X_{n+1})}$ is only a function of the current observation $X_{n+1}$. Also, 
since the processes are i.p.i.d. with laws  $(f_1, \cdots, f_T)$ and $(g_1, \cdots, g_T)$, the likelihood ratio functions $ \frac{g_{n}(\cdot)}{f_{n}(\cdot)}$ 
are not all distinct, and there are only $T$ such functions $ \frac{g_{1}(\cdot)}{f_{1}(\cdot)}$ to $ \frac{g_{T}(\cdot)}{f_{T}(\cdot)}$. Thus, 
we need only a finite amount of memory to store the past statistic, current observation, and $2T$ densities to compute this statistic recursively. 
\end{proof}

\begin{proof}[Proof of Theorem~\ref{thm:LB}]
Let $Z_i = \log \frac{g_i(X_i)}{f_i(X_i)}$ be the log likelihood ratio at time $i$. We show that the sequence $\{Z_i\}$ satisfies the following statement: as $n \to \infty$, 
\begin{equation}\label{eq:thm1_1}
\begin{split}
%\lim_{n \to \infty} 
\sup_{\nu \geq 1} \esssup \; \Prob_\nu & \left(  \max_{t \leq n} \sum_{i=\nu}^{\nu+t} Z_i \geq I(1+\delta)n \big| X_1, \cdots, X_{\nu-1}\right) \\
& \to 0, \quad \forall \delta > 0,
\end{split}
\end{equation}
where $I$ is as defined in \eqref{eq:KLnumber}. The lower bound then follows from Theorem 1 in \cite{lai-ieeetit-1998}. 
Towards proving \eqref{eq:thm1_1}, note that as $n \to \infty$
\begin{equation}\label{eq:thm1_2}
\begin{split}
\frac{1}{n}\sum_{i=\nu}^{\nu+n} Z_i \to \frac{1}{T}\Expect_1\left[ \sum_{i=1}^T \log \frac{g_i(X_i)}{f_i(X_i)}\right] = I, \quad \text{a.s.} \; \Prob_\nu, \; \; \forall \nu \geq 1. 
\end{split}
\end{equation}
The above display is true because of the i.p.i.d. nature of the observation processes. This implies that as $n \to \infty$
\begin{equation}\label{eq:thm1_3}
\begin{split}
 \max_{t \leq n} \frac{1}{n}\sum_{i=\nu}^{\nu+t} Z_i \to I, \quad \text{a.s.} \; \Prob_\nu, \; \; \forall \nu \geq 1.
\end{split}
\end{equation}
To show this, note that
\begin{equation}\label{eq:thm1_4}
\begin{split}
 \max_{t \leq n} \frac{1}{n}\sum_{i=\nu}^{\nu+t} Z_i  = \max \left\{ \max_{t \leq n-1} \frac{1}{n}\sum_{i=\nu}^{\nu+t} Z_i, \; \;  \frac{1}{n}\sum_{i=\nu}^{\nu+n} Z_i\right\}.
\end{split}
\end{equation}
For a fixed $\epsilon > 0$, because of \eqref{eq:thm1_2}, the LHS in \eqref{eq:thm1_3} is greater than $I(1-\epsilon)$ for $n$ large enough. Also, let the maximum on the LHS be achieved at a point $k_n$, 
then 
$$
 \max_{t \leq n} \frac{1}{n}\sum_{i=\nu}^{\nu+t} Z_i = \frac{1}{n}\sum_{i=\nu}^{\nu+k_n} Z_i  = \frac{k_n}{n} \frac{1}{k_n}\sum_{i=\nu}^{\nu+k_n} Z_i.
$$  
Now $k_n$ cannot be bounded because of the presence of $n$ in the denominator. This implies $k_n > i$, for any fixed $i$, and $k_n \to \infty$. Thus, $\frac{1}{k_n}\sum_{i=\nu}^{\nu+k_n} Z_i \to I$. Since $k_n/n \leq 1$, we have that the LHS in \eqref{eq:thm1_3} is less than $I(1+\epsilon)$, for $n$ large enough. This proves 
\eqref{eq:thm1_3}. 
To prove \eqref{eq:thm1_1}, note that due to the i.p.i.d. nature of the process
\begin{equation}\label{eq:thm1_5}
\begin{split}
\sup_{\nu \geq 1} \esssup \; \Prob_\nu & \left(  \max_{t \leq n} \sum_{i=\nu}^{\nu+t} Z_i \geq I(1+\delta)n \big| X_1, \cdots, X_{\nu-1}\right) \\
=\sup_{1 \leq \nu \leq T} \; \Prob_\nu & \left(  \frac{1}{n}\max_{t \leq n} \sum_{i=\nu}^{\nu+t} Z_i \geq I(1+\delta) \right) 
\end{split}
\end{equation}
The right hand side goes to zero because of \eqref{eq:thm1_3} and because the maximum on the right hand side in \eqref{eq:thm1_5} is over only finitely many terms. 
\end{proof}

\begin{proof}[Proof of Theorem~\ref{thm:UB}]
Again with $Z_i = \log \frac{g_i(X_i)}{f_i(X_i)}$, we show that the sequence $\{Z_i\}$ satisfies the following statement:
\begin{equation}\label{eq:thm2_1}
\begin{split}
\lim_{n \to \infty} \sup_{k \geq \nu \geq 1} \esssup \; \Prob_\nu & \left(  \frac{1}{n} \sum_{i=k}^{k+n} Z_i \leq I - \delta \big| X_1, \cdots, X_{\nu-1}\right) \\
& = 0, \quad \forall \delta > 0.
\end{split}
\end{equation}
The upper bound then follows from Theorem 4 in \cite{lai-ieeetit-1998}. 
To prove \eqref{eq:thm2_1}, note that due to the i.p.i.d nature of the process we have 
\begin{equation}\label{eq:thm2_2}
\begin{split}
&\sup_{k \geq \nu \geq 1} \esssup \; \Prob_\nu  \left(  \frac{1}{n} \sum_{i=k}^{k+n} Z_i \leq I - \delta \big| X_1, \cdots, X_{\nu-1}\right) \\
& = \sup_{\nu + T \geq k \geq \nu \geq 1} \Prob_\nu  \left(  \frac{1}{n} \sum_{i=k}^{k+n} Z_i \leq I - \delta \right) \\
& = \max_{1 \leq \nu \leq T} \max_{\nu \leq k \leq \nu+T} \Prob_\nu  \left(  \frac{1}{n} \sum_{i=k}^{k+n} Z_i \leq I - \delta \right) \\
\end{split}
\end{equation}
The right hand side of the above equation goes to zero for any $\delta$ because of \eqref{eq:thm1_2} and also because of the finite number of maximizations. 
The false alarm result follows directly from \cite{lai-ieeetit-1998} with $A=\log \beta$ because the likelihood ratios here also form a $\Prob_\infty$ martingale.
\end{proof}

\begin{proof}[Proof of Theorem~\ref{thm:unknownpost}]
That $\{R_n - nM\}$ is a $\Prob_\infty$ martingale can be proved by direct verification. For the false alarm proof, we assume that 
$\Expect_\infty[\tau_{sr}] < \infty$, otherwise the proof is trivial. Since $\Expect_\infty[R_{\tau_{sr}}] < \infty$, we have that 
$R_{\tau_{sr}} - \tau_{sr} M$ is integrable. Further, as $n \to \infty$, 
%\begin{equation}
%\begin{split}
%\int_{\tau_{sr} > n}  |R_{\tau_{sr}} - & \tau_{sr} M| \; dP_\infty \leq \int_{\tau_{sr} > n}  R_{\tau_{sr}} + \tau_{sr} M \; dP_\infty \\
%&\leq \beta M \Prob_\infty(\tau_{sr} > n) + \int_{\tau_{sr} > n}  \tau_{sr} M \; dP_\infty \\
%&\to 0, \quad {as}, \; n \to \infty.
%\end{split}
%\end{equation}
\begin{equation}
\begin{split}
\int_{\tau_{sr} > n} & |R_{\tau_{sr}} -  \tau_{sr} M| \; dP_\infty \leq \int_{\tau_{sr} > n}  R_{\tau_{sr}} + \tau_{sr} M \; dP_\infty \\
&\leq \beta M \Prob_\infty(\tau_{sr} > n) + \int_{\tau_{sr} > n}  \tau_{sr} M \; dP_\infty \; \to \; 0.
\end{split}
\end{equation}
Thus, by the optional sampling theorem \cite{wood-nonlin-ren-th-book-1982} and \eqref{eq:CUSUMleqSR} we have
$$
\Expect_\infty[\tau_{cm}] \geq \Expect_\infty[\tau_{sr}]  = \frac{\Expect_\infty[R_{\tau_{sr}}]}{M} \geq \frac{M\beta}{M} = \beta.
$$
The delay result is true because of \eqref{eq:UBmax}. 
\end{proof}

\balance

%\vspace{-0.2cm}
%\footnotesize
%\pagebreak
%\pagebreak
%\newpage
%\clearpage
%\newpage
%\nocite{*}
\bibliographystyle{ieeetr}

%\bibliographystyle{elsarticle-harv}

%\bibliographystyle{elsarticle-harv}
%\bibliography{QCD_verVV}
%\newpage

\bibliography{QCD_verSubmitted}

\end{document}